\documentclass[preprint,12pt]{elsarticle}
\usepackage[utf8]{inputenc}
\usepackage[T1]{fontenc}     
\usepackage[english]{babel}  

\usepackage{amssymb}
\usepackage{amsmath}

\usepackage{graphicx}
\usepackage{xcolor}
\usepackage{color}
\usepackage{hyperref}
\usepackage{geometry}
\usepackage{subcaption}
\usepackage{comment}
\usepackage{amsthm}

\newcommand{\wt}[1]{\widetilde{#1}}

\newcommand{\ber}[2]{\text{ber}_{#1}\left(#2\right)}
\newcommand{\bei}[2]{\text{bei}_{#1}\left(#2\right)}

\newtheorem{proposition}{Proposition}

\journal{Applied Mathematical Modelling}

\begin{document}

\begin{frontmatter}



\title{On mathematical characterization of a Bessel functions-based passive element in electronic circuits\footnotetext{{\it Published in:} Applied Mathematical Modelling, 154 (2026) 116683. {\bf DOI}: \href{https://doi.org/10.1016/j.apm.2025.116683}{10.1016/j.apm.2025.116683}}} 


\author{Ivano Colombaro} 

\affiliation{organization={Faculty of Engineering, Free University of Bozen-Bolzano},
            addressline={via Bruno Buozzi 1}, 
            city={Bolzano},
            postcode={39100}, 
            country={Italy}}

\author{Marc Tudela-Pi}
\affiliation{Instituto de Microelectronica de Barcelona (IMB-CNM),  Consejo Superior de Investigaciones Cientificas (CSIC), Bellaterra, Spain}
\affiliation{CIBER-BBN, Instituto de Salud Carlos III, Madrid, Spain}

\begin{abstract}
Modeling relaxation phenomena in complex media is central to understanding multiscale dynamics in materials science, bioengineering and condensed matter physics. Existing fractional-order models, while flexible, sometimes lack physical interpretability, closed-form time-domain expressions, and compatibility with physically realizable architectures. In this work, we propose a novel passive element whose impedance and admittance are defined analytically via modified Bessel functions of first kind, through the electro-mechanical analogy. This approach preserves key physical properties such as analyticity, passivity, BIBO (bounded-input, bounded-output) stability and monotonicity, while enabling the direct use of its time-domain representation in simulations and system modeling. As an application, we demonstrate that this model accurately captures the broadband dispersive behavior of biological tissues, offering a physically grounded and tractable alternative to fractional-order formulations.
\end{abstract}

\begin{keyword}
Relaxation phenomena \sep Bessel functions \sep Passive element \sep Impedance \sep Electro-mechanical analogy 



\end{keyword}

\end{frontmatter}



\section{Introduction}

A relaxation process refers to the time-dependent evolution by which a system progressively returns to equilibrium after experiencing an external perturbation \cite{jonscher1999}. These processes are fundamental to understanding how energy is dissipated, how internal structures reorganize, and how physical fields adjust in response to time-varying stimuli. Relaxation dynamics are investigated across a wide range of scientific disciplines, including polymer physics \cite{boyd2007}, viscoelasticity \cite{metzler2003,colombaro2017bessel}, dielectrics \cite{garrappa2016modelsofdielectric,oliveira2011MLrelaxation}, glassy systems \cite{narayanaswamy1971}, and biological tissues \cite{gabriel2009,ivorra2003}. In each of these contexts, the observed temporal response reflects the intrinsic material structure, the nature of transport processes, and the interaction among coupled dynamical modes.

Biological systems, in particular, exhibit structural and functional complexity that gives rise to diverse relaxation behaviors. Tissues are composed of heterogeneous elements, notably cellular membranes with ionic channels, extracellular matrices and ionically conductive media. These components influence dielectric behavior through processes spanning a broad range of spatial scales and relaxation times \cite{martinsen2011}. Consequently, their response to external perturbations often involves a wide distribution of relaxation times and non-trivial frequency-dependent characteristics that reflect the diversity of underlying physiological mechanisms \cite{gabriel1996}. To model such behavior, various theoretical and empirical approaches have been proposed. The simplest is the Debye model \cite{debye1929}, which is mathematically equivalent to a single resistor-capacitor (RC) circuit element and assumes a unique relaxation time. This leads to a purely exponential decay, which fails to capture the broad or asymmetric relaxation spectra commonly observed in heterogeneous systems \cite{martinsen2011}. In response to these limitations, generalized formulations such as the Cole-Cole \cite{cole1941}, Cole-Davidson \cite{davidson1951}, and Havriliak-Negami \cite{havriliak1967} models introduce empirical parameters to account for continuous distributions of relaxation times. Although these fractional-order models offer improved agreement with experimental data, they sometimes fail to provide a clear connection to underlying physical mechanisms. In addition, their frequency-domain expressions are not easily invertible, which complicates their use in time-domain simulations and hinders their integration into circuit-level representations \cite{Mainardi2010book}. These issues underscore the need for alternative models that are both analytically tractable and more directly connected to the physics of the system \cite{naranjo2019}.

In this paper, we introduce a novel passive electrical component whose impedance is defined in the frequency domain by a ratio of modified Bessel functions of the first kind. This element captures distributed relaxation behavior within a compact and physically interpretable formulation and it yields closed-form expressions that are invertible in the time domain, and it is fully compatible with circuit-level modeling. These properties make it particularly suitable for describing dielectric and bioelectrical responses in systems with complex relaxation dynamics.

Specifically, in Sect.~\ref{sec:visco} we outline how the Bessel models originally emerged from studies in linear viscoelasticity, particularly in the context of hemodynamic modeling. Moreover, we introduce the electromechanical analogy, which establishes a correspondence between viscoelastic mechanical elements, such as springs and dashpots, and their electrical counterparts, namely resistors, capacitors, and inductors, thereby providing a unified framework for representing dissipative systems across domains and so enabling the translation of mechanical relaxation phenomena into impedance-based electrical models.
Subsequently, we define and characterize an impedance based on Bessel functions in Sect.~\ref{sec:impedance}. 
From a physical standpoint, \emph{impedance} represents the dynamic opposition that a system offers to the flow of energy when subjected to a time-varying excitation. It embodies the balance between \emph{energy storage} and \emph{dissipation}, linking the effort variable (voltage, force, pressure) to the corresponding flow variable (current, velocity, flux) through a causal, frequency-dependent relation.
In electrical systems, the impedance $Z(s)$ connects voltage and current via $V(s) = Z(s) I(s)$, but the same concept extends naturally to mechanical, acoustic, or biological systems through the \emph{electromechanical analogy}, where $Z(s)$ quantifies how the system resists or delays the transfer of energy.
Interestingly enough, Bessel impedance  satisfies the properties such as analiticity, differentiability, causality, monotonicity and BIBO stability, namely bounded input produces a bounded output, ensuring that the system does not exhibit unbounded behavior in response to finite excitations.
Such impedance properties imply that the system is passive, where a passive element is a system component that does not generate energy but only stores or dissipates it. 
Indeed, a \emph{passive impedance} is one that cannot deliver net energy to the surroundings, ensuring stability and consistency with the second law of thermodynamics.
The proposed Bessel impedance follows this framework as a passive element whose impedance describes non-exponential relaxation and distributed memory, extending the concept of diffusive or fractional impedances to Bessel-based dynamics.
After dealing with the impedance, which admits also an analytical time-domain representation, we illustrate the admittance and its properties in Sect.~\ref{sec:admittance}.
Admittance is the reciprocal of impedance and represents the ease with which a system allows energy to flow in response to an external excitation. 
In the next Sect.~\ref{sec:plots}, we then reproduce the plots of the magnitude and the phase of the impedance of the passive element based on Bessel functions, in particular concerning the application to biological tissues and their physical interpretation.
A discussion on the topic is finally written in Sect.~\ref{sec:discussion} and concluding remarks are in Sect.~\ref{sec:end}.

\section{Viscoelasticity and electro-mechanical analogy}
\label{sec:visco}

Bessel models in linear viscoelasticity  have been applied to problems in hemodynamic, particularly in modeling pulse wave propagation in fluid-filled elastic tubes, such as arteries~\cite{giusti2016dynamic}. In particular, they have been formalized~\cite{colombaro2017bessel} and applied~\cite{colombaro2017oneparameter} to viscoelastic models, also by with defining the so-called specific attenuation factor~\cite{colombaro2023Qbessel} and by introducing the wave-front approximation~\cite{colombaro2017onthepropagation}, but there are applications also in other fields, e.~g. modeling acoustic impedance~\cite{drozda2023diffusive}.
From the physical point of view, the relevance of these models in linear viscoelasticity is due to their extension of classical Maxwell model, since they exhibit fractional Maxwell-like behavior at short times and standard Maxwell behavior at long times, making them versatile for modeling both fluid-like and solid-like responses.

In the context of linear viscoelasticity, one of the main expressions is the reciprocity relation between the material functions in the Laplace domain~\cite{Mainardi2010book}. 
Given the creep compliance $J(t)$ and the relaxation modulus $G(t)$, in a viscoelastic Bessel medium the reciprocity relation is written in terms of the modified Bessel functions of the first kind $I_\alpha$, namely 
\begin{equation}
    s\wt{J}(s) = \frac{I_{\nu}(\sqrt{s})}{I_{\nu+2}(\sqrt{s})} = \frac{1}{s\wt{G}(s)}, \qquad \nu >-1 \,,
\end{equation}
reminding that the series representation of these special functions is
\begin{equation} \label{eq:I_nu}
I_{\alpha} (z) := \left( \frac{z}{2} \right) ^{\alpha} \sum _{m=0} ^\infty \frac{1}{m! \, \Gamma (m+\alpha+1)} \left( \frac{z}{2} \right) ^{2m} \,,
\end{equation}
where $\Gamma(z)$ is Euler Gamma function~\cite[Ch.~6]{Abramowitz1965handbook}.

There exists a recognized equivalence between viscoelastic systems and electrical ladder structures, as first formalized by Gross and Fuoss~\cite{gross1956electricalanalogs, gross1956ladderstructures} and subsequently developed by Giusti and Mainardi~\cite{giusti2016infiniteseries} to show how a viscoelastic response function can also be interpreted as the response of an equivalent electrical circuit, by exploiting Laplace transforms and Dirichlet series. This analogy establishes a direct correspondence between the physical quantities involved: {the electric potential $V$ is analogous to the mechanical stress $\sigma$, and the electric current $i$ corresponds to the strain rate $\dot{\varepsilon}$. In this mapping, resistor $R$ reflects viscosity, while capacitor $C$ represents the reciprocal of the elastic modulus}. This electro-mechanical analogy enables the use of electrical ladder networks to represent viscoelastic models exhibiting relaxation behavior. Such structures are particularly suitable for capturing distributed relaxation times while remaining compatible with realizable passive circuit architectures. Unlike fractional-order models, which often require abstract or non-physical elements, this approach leads to implementable models that are well suited for both simulation and hardware realization. 

The electro-mechanical analog for this class of viscoelastic models based on modified Bessel functions is derived in~\cite{colombaro2018besselwseas}, and within this framework we can write the following relation between the electric current \( \widetilde{i}(s) \) and the electric potential \( \widetilde{V}(s) \) in the Laplace domain as

\begin{equation}\label{eq:V/I-Lap}
    \frac{\wt{V}(s)}{\wt{i}(s)} = \frac{I_{\nu}(\sqrt{s})}{I_{\nu+2}(\sqrt{s})} \,,\qquad \nu > -1 \, .
\end{equation}

As a result of this formal correspondence, we define an electrical component based on the Bessel medium and depending so on the parameter coming from the order $\nu$ of Bessel functions, namely
\begin{equation}\label{eq:Bstor}
    {B}_\nu(s) := \frac{I_{\nu}(\sqrt{s})}{I_{\nu+2}(\sqrt{s})} \,,\qquad \nu > -1 \, .
\end{equation}
This approach is motivated by the fact that from the viscoelastic point of view this class of models has a physical interpretation and it matches a fractional Maxwell model of order $1/2$ for short times together with an ordinary Maxwell model at long time. So, in the field of electronic engineering, it can contribute to enrich the literature and to describe a new class of models, going beyond the classical ones.
Hereinafter, we are going to develop the properties of this original element, highlighting strengths coming from the use of this approach, that is able to describe a resistive and capacity behavior at the same time.

\section{Definition and properties of the Bessel element impedance}
\label{sec:impedance}


At this point, let us consider a circuit composed by a voltage source $V$ and a Bessel element and let us deduce its characteristics, by means of impedance $\wt{Z}_B$, which we properly define in terms of physical units. In the Laplace domain, the impedance has to be proportional to the ratio of the two modified Bessel functions, namely $\wt{Z}_B\sim B_\nu (s)$, and it is convenient in these cases to have an dimensionless argument, so that we introduce a relaxation time $\tau$ and we compute the following rescaling
\begin{equation}
    \wt{Z}_B \sim B_\nu (s\tau) = \frac{I_{\nu}(\sqrt{s\tau})}{I_{\nu+2}(\sqrt{s\tau})} \,,\qquad \nu > -1 \, .
\end{equation}
At this stage, to have the impedance measured in Ohm $(\Omega)$, we multiply the rescaled nondimensional $B_\nu$ by a resistor $R_\infty$, and we obtain the final formula for the impedance of a novel passive element depending on Bessel functions, written
\begin{equation}\label{eq:ZBdef}
    \wt{Z}_B (s) = R_\infty \frac{I_{\nu}(\sqrt{s\tau})}{I_{\nu+2}(\sqrt{s\tau})} \,,\qquad \nu > -1 \, ,
\end{equation}
so that, by definition
\begin{equation}\label{eq:ZdefLaplace}
    \frac{\wt{V} (s)}{\wt{i} (s)} = \wt{Z}_B (s) \,.
\end{equation} 


From the mathematical point of view, an element as~\eqref{eq:ZBdef} presents properties that are crucial in mathematical modeling of physical phenomena.
First of all, Bessel functions are by definition analytic for $\nu>-1$~\cite{baricz_2010Bessel}, being defined as a power series in~\eqref{eq:I_nu}, they converge absolutely and uniformly on compact subsets of $\mathbb{C}$.

\begin{proposition}
Let $\nu > -1$ and $\tau > 0$ be a fixed positive real parameter.
Then, Define the function
$$ B_\nu (s\tau) := \frac{I_{\nu}(\sqrt{s\tau})}{I_{\nu+2}(\sqrt{s\tau})} \,,\qquad \nu > -1 \,, $$
where $I_\nu(z)$ denotes the modified Bessel function of the first kind of order $\nu$. Then $B_\nu(s\tau)$ is analytic on the domain
$$\mathcal{D} = \left\{ s \in \mathbb{C} \setminus (-\infty, 0] \,\middle| \, I_{\nu+2}(\sqrt{s\tau}) \neq 0 \right\} \,.$$
\end{proposition}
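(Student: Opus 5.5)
We argue through composition and quotients of analytic functions, using the power series \eqref{eq:I_nu}. First fix the principal branch of the square root on $\mathbb{C}\setminus(-\infty,0]$, so that $w(s):=\sqrt{s\tau}$ is analytic there (since $\tau>0$, $s\tau$ stays in the slit plane) and takes values in the open right half-plane $\operatorname{Re} w>0$. The branch cut is exactly why $(-\infty,0]$ is removed from $\mathcal{D}$.

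The only delicate point is that $I_\alpha(z)$ contains the factor $(z/2)^\alpha$, which for non-integer $\alpha$ is not entire. We therefore factor it out. We write
\begin{equation*}
I_\alpha(z)=\left(\frac z2\right)^{\alpha} F_\alpha(z^2),\qquad F_\alpha(u):=\sum_{m=0}^\infty \frac{(u/4)^m}{m!\,\Gamma(m+\alpha+1)} .
\end{equation*}
The series $F_\alpha$ is entire for $\alpha>-1$: the terms are finite because $m+\alpha+1>0$, so no Gamma pole occurs, and the ratio test gives infinite radius of convergence. This is the same convergence claim the paper makes for \eqref{eq:I_nu}.

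Next, $(w/2)^\alpha=\exp(\alpha\log(w/2))$ is analytic on $\operatorname{Re} w>0$, using the principal logarithm. Composing, $s\mapsto I_\alpha(\sqrt{s\tau})$ is analytic on $\mathbb{C}\setminus(-\infty,0]$ for $\alpha=\nu$ and $\alpha=\nu+2$.

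The quotient of two analytic functions is analytic wherever the denominator does not vanish. This gives analyticity of $B_\nu(s\tau)$ on $\mathcal{D}$ directly. It is also cleaner to note the simplification
\begin{equation*}
B_\nu(s\tau)=\frac{4}{s\tau}\,\frac{F_\nu(s\tau)}{F_{\nu+2}(s\tau)} ,
\end{equation*}
in which the power factors cancel. This identity even shows the branch cut is removable apart from zeros of $F_{\nu+2}$ and the point $s=0$. Strictly, the statement does not need this remark, but it confirms consistency.

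The main obstacle is only a careful treatment of the branch in $(z/2)^\alpha$. Optionally one can add that $\mathcal{D}$ is open, because the zero set of a nontrivial analytic function is closed and discrete. For $\nu>-1$ the zeros of $I_{\nu+2}(\sqrt{s\tau})$ correspond to $\sqrt{s\tau}$ purely imaginary, i.e. $s$ negative real, by the classical reality of the zeros of $J_{\nu+2}$. Hence $\mathcal{D}$ is in fact the whole slit plane; this last point is a bonus and not needed.
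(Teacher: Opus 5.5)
Your proof is correct and has the same skeleton as the paper's: analyticity of the principal branch $s\mapsto\sqrt{s\tau}$ on $\mathbb{C}\setminus(-\infty,0]$, then composition, then the quotient rule away from zeros of the denominator. You differ in how you justify that $s\mapsto I_\alpha(\sqrt{s\tau})$ is analytic.

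The paper gets this by asserting that $I_\nu$ is entire for every $\nu>-1$, because the series \eqref{eq:I_nu} converges on all of $\mathbb{C}$. That is only true for integer $\nu$: the prefactor $(z/2)^\nu$ has a branch point at $z=0$. Your factorization $I_\alpha(z)=(z/2)^\alpha F_\alpha(z^2)$ with $F_\alpha$ entire is what makes the composition step valid. You combine it with the fact that the principal square root maps the slit plane into $\operatorname{Re} w>0$, where $(w/2)^\alpha=\exp(\alpha\log(w/2))$ is analytic. So your version repairs an imprecision in the paper's argument rather than reproducing it.

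Your route also proves more than the statement asks. The identity $B_\nu(s\tau)=\frac{4}{s\tau}\,F_\nu(s\tau)/F_{\nu+2}(s\tau)$ shows that $B_\nu$ continues meromorphically to all of $\mathbb{C}$. Its poles lie only at $s=0$ and at $s=-j_{\nu+2,k}^2/\tau$, the locations the paper later uses for its time-domain expansion. Your appeal to the reality of the zeros of $J_{\nu+2}$ shows that $\mathcal{D}$ is the entire slit plane. That is sharper than the paper's remark that the zeros of $I_{\nu+2}$ avoid the positive real axis.
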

\begin{proof}
The modified Bessel function $I_\nu(z)$ is entire for every $\nu > -1$, as follows from its power series representation~\eqref{eq:I_nu}, which converges absolutely for all $z \in \mathbb{C}$.
The substitution $z = \sqrt{s\tau}$ is analytic on $\mathbb{C} \setminus (-\infty, 0]$ due to the branch cut of the complex square root. Therefore, both $I_\nu(\sqrt{s\tau})$ and $I_{\nu+2}(\sqrt{s\tau})$ are analytic on this domain.

The quotient of two analytic functions is analytic wherever the denominator does not vanish. Since the zeros of $I_{\nu+2}(z)$ are isolated and do not lie on the positive real axis for $\nu > -1$, the function $B_\nu(s\tau)$ is analytic on $\mathcal{D}$.

\end{proof}

Analyticity implies that functions are infinitely differentiable and it leads to a very strong form of smoothness, so that the function $B_\nu$, and so $\wt{Z}_B$, exhibits smooth and continuous differentiability~\cite{simpson1984modifiedBessel, segura2021modifiedBessel}, ensuring gradual transitions across the domain $\mathcal{D}$.

Having an impedance function $\wt{Z}_B(s)$ analytic in the right half-plane $\Re\{s\}>0$ is a necessary condition for causality. The necessary condition is given by the fact of having vanishing inverse Laplace transform for $t<0$, as will be developed in details in~\ref{subsec-timedomain}, but it can be straightforwardly deduced by the analytic expression in~\eqref{eq:ZB-timedomain}.
Due to these facts, we can conclude that the system is \textit{causal}. 

Causality implies that response can not be produced before an input, and the system is \textit{stable}, guaranteeing that bounded input leads to bounded output, known as BIBO stability. A causal and stable function that is positive real in $\Re\{s\}>0$ identifies a \textit{passive} element, that is relevant in this context since a system is said {passive} if it does not generate energy, but it can only store or dissipate it, supporting physical realism.

It has been observed for the viscoelastic analogous model~\cite{colombaro2017bessel} that, for all values of $\nu$, function expressed as $B_\nu$ in the Laplace domain decreases monotonically in the time domain, so that the same holds for $\wt{Z}_B(s)$. The property of monotonicity is relevant because monotonic impedance functions tends to have a meaningful behavior from the physical point of view. Specifically, monotonicity ensures uniqueness of solution in inverse problems and it often aligns with passive system behavior, especially when combined with analyticity and positive realness, as in our case.
Moreover, from the numerical point of view, monotonic functions are more stable and they reduce the risk of local minima in optimization algorithms used for fitting.

It might also be interesting to study the asymptotic behavior of $B_\nu$. For
vanishing argument and fixed $\nu>-1$, we know that the  the series expansion of modified Bessel functions tends to~\cite[Eq.~(9.6.7)]{Abramowitz1965handbook}
\begin{equation}
    I_{\nu}(z) \overset{{z\to 0}}{\sim} \frac{1}{\Gamma(\nu+1)}\left(\frac{z}{2}\right)^\nu \,,
\end{equation}
leading to
\begin{equation}\label{eq_Bnu-asym0}
    B_\nu (s\tau) \overset{{s\to 0}}{\sim} \frac{\Gamma(\nu+3)}{\Gamma(\nu+1)} \frac{4}{s\tau} \,,
\end{equation}
so proportional to $1/s$.
Differently, the modified Bessel functions for large arguments are computed in accordance with a Tauberian theorem by means of the inverse Laplace transform, as shown in~\cite[Eq.~(35)]{colombaro2017bessel} or more detailed in~\cite[Appendix~A]{giusti2016infiniteseries}, and it is
\begin{equation}
     I_{\nu}(z) \overset{{\vert z \vert \to \infty}}{\sim} \frac{e^z}{\sqrt{2\pi z}} \,,
\end{equation}
giving then the constant result
\begin{equation}\label{eq_Bnu-asyminfty}
    B_\nu (s) \overset{{s\to \infty}}{\sim} 1 \,.
\end{equation}
To summarize, we can write the asymptotic behavior of $\wt{Z}_B$ for low and high frequency, respectively
\begin{gather}
    \wt{Z}_B (s) \overset{{s\to 0}}{\sim} \frac{\Gamma(\nu+3)}{\Gamma(\nu+1)} \frac{4\, R_\infty}{s\tau} \,, \label{eq:asymzeroZ} \\
    \wt{Z}_B (s) \overset{{s\to \infty}}{\sim} R_\infty \,. \label{eq:asyminftyZ}
\end{gather}
As a more technical note, the real part of $\widetilde{Z}_B(j\omega)$ exhibits a globally decreasing profile, while the asymptotic behavior shows high dissipation at low frequencies, with $\Re[\widetilde{Z}_B(j\omega)] \sim \omega^{-1}$ as $\omega \to 0$, and convergence to a constant \( R_\infty \) as \( \omega \to \infty \). The absence of oscillatory terms and the smoothness of the Bessel quotient suggest a decay that is monotonic in practice, consistent with the expected behavior of relaxation-based systems.

In addition, the impedance function in~\eqref{eq:ZBdef} can be approximated using finite series expansions. In details, \textit{approximability via finite series expansion} refers to the ability to represent a complex function as a truncated sum of simpler, well-understood basis functions, typically polynomials or exponential terms or special functions, like Bessel functions. This property is foundational in both theoretical and computational contexts, as it enables the transformation of infinite-dimensional problems into finite-dimensional approximations that are tractable for analysis and numerical computation. The convergence of such series is often governed by means of analyticity or smoothness of the target function, and that the approximation error can be made arbitrarily small by increasing the number of terms. Moreover, the structure of the series often encodes physical insights, such as diffusive or memory effects, making approximability not only a computational convenience, but also a bridge between abstract mathematics and real-world phenomena.
In our case, the concept of approximation via finite series expansion, combined with monotonicity, analyticity, and passivity of the Bessel element, makes it a powerful and elegant model for bioimpedance, since finite series expansions allow for fast and accurate numerical evaluation, which is crucial then in real-time applications.

\subsection{Spectral control and parameter interpretability}

Spectral dispersion usually refers to the variation of a system's response with respect to the frequency~\cite{femmam2017fundamentals}. Thus, a smooth spectral dispersion implies that this variation is continuous and differentiable, without abrupt transitions or resonances.
This is particularly relevant in biological systems, diffusive media, or fractional-order systems, where responses are gradual rather than sharply resonant. This behavior is elegantly captured by models like the proposed element, which leverage the analytic structure of modified Bessel functions to produce impedance functions that are not only smooth and monotonic but also approximable via finite series expansions.

Moreover, the smoothness and analytic nature of the impedance $\wt{Z}_B$ enable spectral control, namely the ability to shape the frequency response predictably and continuously, making it ideal for applications in bioimpedance spectroscopy, tissue characterization, and circuit synthesis. Together, these properties form a powerful modeling framework that bridges mathematical elegance with practical utility.

From~\eqref{eq:ZBdef}, we notice that $\wt{Z}_B(s)$ depends on $s$, the variable in the Laplace domain related to the angular frequency $\omega$ thanks to the relation $s=\jmath \omega$, which technically allows then to move to Fourier transform, and by three parameters: $R_\infty$, $\tau$, and $\nu$.
An attentive reader might have deduced that we refer to $R_\infty$ as the value of the impedance for high frequencies, measured in Ohm $(\Omega)$, namely $s\to\infty$, as a consequence of~\eqref{eq_Bnu-asyminfty}. 
Instead, $\tau$ is the relaxation time, measured then in seconds $(s)$, that controls the frequency scaling of the impedance response. As a matter of fact, $s\tau \ll 0$ identifies low-frequency behavior (so long timescales), while $s\tau \gg 0$ means high-frequency behavior (or short timescales).
Finally, $\nu$ is a dimensionless parameter coming from modified Bessel functions, that shapes the spectral dispersion and it controls the smoothness and curvature of the impedance function across frequencies. We will also deepen hereinafter that there is a functional interdependence between $\nu$ and $\tau$, respectively the order and a part of the argument $z=\sqrt{s\tau}$ of Bessel functions, as might be intuitively deduced from the series representation~\eqref{eq:I_nu}. Focusing on the impedance $\wt{Z}_B(s)$, increasing $\nu$ shifts the weight of the series to higher powers of $\sqrt{s\tau}$, concentrating so the response in a narrower frequency band, while increasing $\tau$ stretches the argument $\sqrt{s\tau}$, effectively shifting the frequency response.
We might appreciate from the following pages that, in the electromechanical analogy, $\nu$ serves as a tuning parameter that modulates the interplay between elastic energy storage and viscous dissipation, thereby shaping the frequency dependence of the system. Rather than representing a single physical quantity, such as mass or resistance, it encapsulates the intricate internal relaxation dynamics in a concise and adjustable form.

\subsection{Time-Domain Representation}\label{subsec-timedomain}

Unlike fractional-order models, whose time-domain kernels often lack closed-form expressions or require numerical inversion, the Bessel impedance admits an exact analytical formulation in the time domain. This is particularly relevant for simulating transient responses, implementing time-stepping schemes, or analyzing memory effects in physical systems.

In particular, the inverse Laplace transform of the transfer function $B_\nu(s)$ in~\eqref{eq:Bstor} yields
\begin{equation}
    B_\nu(t) = \frac{2(\nu + 1)}{\tau} + \frac{4(\nu + 1)(\nu + 2)}{\tau} \sum_{k=1}^{\infty} \exp\left( - j^2_{\nu+2,k} \frac{t}{\tau} \right), \qquad \nu > -1 \,,
\end{equation}
where $j_{\nu+2,k}$ denotes the $k$-th positive zero of the Bessel function $J_{\nu+2}(z)$.
Consequently, the impulse response of the impedance $\wt{Z}_B(s)$ defined in~\eqref{eq:ZBdef} is
\begin{equation}\label{eq:ZB-timedomain}
    z_B(t) = \frac{R_\infty}{\tau} \left[ 2(\nu + 1) + 4(\nu + 1)(\nu + 2) \sum_{k=1}^\infty \exp\left( - j^2_{\nu+2,k} \frac{t}{\tau} \right) \right],
\end{equation}
which has units of $\Omega/\text{s}$ and defines a causal convolution kernel
\begin{equation} \label{eq:conv_bessistor}
    V(t) = \left( z_B * i \right) (t) = \int_0^\infty z_B(t - t') i(t')\, \mathrm{d}t'\,,
\end{equation}
where $*$ refers to Laplace convolution.
Although impedance is typically a frequency-domain concept, transforming it into the time domain might allow, for instance, to analyze transient behavior of circuits or to understand causality and memory effects, apart from playing a crucial role in numerical simulations and finite-element methods.
Furthermore, several known techniques could be implemented for the computation of the inverse Laplace transform~\cite{garrappa2015numerical, garrappa2018survey}.

The representation in time domain in~\eqref{eq:ZB-timedomain} reveals the modal structure of this novel element: each exponential term corresponds to a relaxation mode with a specific decay rate. The series converges rapidly and can be truncated for practical computations. Still, for efficient simulation over long time spans or in multiscale scenarios, it is useful to compress the tail of the kernel.

\subsubsection{Hybrid kernel approximation}

While the time-domain kernel $z_B(t)$ derived in the previous section admits an exact modal expansion, the infinite summation may be computationally intensive or unnecessary in practical implementations. To address this, we introduce a \textit{hybrid kernel} formulation that preserves exact accuracy in the early-time response while employing a compressed representation for the long-time tail.

For numerical and practical purposes, the impulse response can be naturally decomposed into two contributions
\begin{equation} \label{eq:hybridkernel}
    z_B(t) = z_B^{(N)}(t) + z_B^{\text{tail}}(t),
\end{equation}
defined as
\begin{align}
    z_B^{(N)}(t) &= \frac{R_\infty}{\tau} \left[ 2(\nu + 1) + 4(\nu + 1)(\nu + 2) \sum_{k=1}^{N} \exp\left( - \frac{j_{\nu+2,k}^2 t}{\tau} \right) \right], \label{eq:zkexact} \\
    z_B^{\text{tail}}(t)  &= \frac{R_\infty}{\tau} \mathcal{G}_\nu^{(N)}(t) \,, \label{eq:zktail}
\end{align}
where $\mathcal{G}_\nu^{(N)}(t)$ approximates the residual tail contribution beyond $k = N$, we introduce a compact surrogate based on a rational-exponential expression
\begin{equation}\label{eq:GnuN}
\mathcal{G}_\nu^{(N)}(t) \approx A_\nu^{(N)} \frac{e^{- \lambda_\nu^{(N)} t / \tau}}{1 - r_\nu^{(N)} e^{- \lambda_\nu^{(N)} t / \tau}},
\end{equation}
where the parameters \( A_\nu^{(N)} \), \( r_\nu^{(N)} \), and \( \lambda_\nu^{(N)} \) are fitted to match the amplitude and decay profile of the Bessel tail beyond the truncation point.  This approximation drastically reduces the number of required terms while retaining physical interpretability and spectral continuity and, interestingly, \eqref{eq:GnuN} admits the exact expansion
\begin{equation}
\mathcal{G}_\nu^{(N)}(t) = A_\nu^{(N)} \sum_{k=0}^\infty r_\nu^{(N)k} \exp\left( - \frac{(k+1) \lambda_\nu^{(N)}}{\tau} t \right).
\end{equation}
The resulting hybrid kernel achieves a balance between interpretability and efficiency: it preserves the dominant physical modes exactly while modeling the remaining infinite hierarchy with a compact analytical form. This makes it particularly attractive for use in time-domain solvers, finite-element implementations and real-time applications. 
In details, this construction captures the key properties of the exact tail: causality, smoothness, and exponential decay. It also preserves the passive and physically realizable structure of the kernel, while dramatically reducing the computational cost of evaluating the tail. 

Crucially, the parameters \( \lambda_\nu^{(N)} \), \( r_\nu^{(N)} \), and \( A_\nu^{(N)} \) can be estimated directly from the known asymptotic behavior of the Bessel zeros. For instance, large values of $k$ give
\begin{equation}
j_{\nu+2,k} \sim \pi \left( k + \frac{\nu + 1}{2} \right), \qquad k \to \infty \,,
\end{equation}
which imply that the exponents in the tail grow approximately quadratically, leading to
\begin{align}
\lambda_\nu^{(N)} &:= \pi^2 \left( N + 1 + \frac{\nu + 1}{2} \right)^2, \\
r_\nu^{(N)} &:= \left( \frac{ N + 2 + \frac{\nu + 1}{2}}{N + 1 + \frac{\nu + 1}{2}} \right)^2, \\
A_\nu^{(N)} &:= \frac{4 (\nu + 1)(\nu + 2)}{1 - r_\nu^{(N)}}\,.
\end{align}

{Furthermore, we now show that this hybrid formulation accurately reproduces the full kernel behavior with negligible error beyond the truncation point, and significantly accelerates convolution-based simulations without compromising physical consistency.}
Let us write the hybrid kernel in~\eqref{eq:hybridkernel} as
\begin{equation}
    z_B^{(N+\text{tail})}(t) = z_B^{(N)}(t) + \frac{R_\infty}{\tau} \mathcal{G}_\nu^{(N)}(t) \,,
\end{equation}
according to~\eqref{eq:zkexact} and~\eqref{eq:zktail}.
To evaluate the accuracy of the hybrid kernel representation, we quantify the discrepancy between the exact kernel $z_B(t)$ in time domain, defined in~\eqref{eq:ZB-timedomain}, and its truncated approximations. In particular, we consider the \emph{pure modal truncation} $z_B^{(N)} (t)$, which retains only the first $N$ exponential terms, and the \emph{hybrid approximation} $z_B^{(N+\text{tail})} (t)$, which combines the exact sum of the first $N$ modes with a compressed geometric tail.
We define the relative approximation error as
\begin{equation}
\varepsilon(N) := \frac{\left\| z_B(t) - {z}_B^{*}(t) \right\|_2}{\left\| z_B(t) \right\|_2} 
\end{equation}
where \( {z}_B^{*}(t) \) stands for either \( z_B^{(N)}(t) \) or \( z_B^{(N+\text{tail})}(t) \), and the norm \( \| \cdot \|_2 \) is evaluated numerically over a fixed interval \( t \in [0, T] \), using a uniform discretizations of \( M \) time samples.
This error measure is \emph{independent of any input signal} $i(t)$ and solely reflects the intrinsic fidelity of the kernel approximation. Although any input $i(t)$ affects the system output $V(t)$ via the convolution in~\eqref{eq:conv_bessistor}
the kernel error \( \varepsilon(N) \) is defined without reference to \( i(t) \), making it a robust and input-agnostic metric for assessing the quality of the approximation.

Let us now consider the case $N=1$.
Then, the exact kernel \( z_B(t) \) is computed using~\eqref{eq:ZB-timedomain}, while the truncated kernel with $N = 1$ becomes
\begin{equation}
    z_B^{(1)}(t) = \frac{R_\infty}{\tau} \left[ 2(\nu+1) + 4(\nu+1)(\nu+2) e^{-j_{\nu+2,1} \, t/\tau} \right],
\end{equation}
where \( j_{\nu+2,1} \) is the first positive zero of the Bessel function \( J_{\nu+2}(x) \). The hybrid approximation adds the geometric tail \( G^{(1)}_\nu(t) \) defined in~\eqref{eq:GnuN}, leading to
\begin{equation}
z_B^{(1+\text{tail})}(t) = z_B^{(1)}(t) + \frac{R_\infty}{\tau} G^{(1)}_\nu(t)\,.    
\end{equation}

\section{Definition and properties of the Bessel element admittance}
\label{sec:admittance}

We are now going to consider the admittance related to the Bessel functions-based element that we are considering, namely
\begin{equation} \label{eq:Y_Bdef}
    \wt{Y}_B(s) = \frac{1}{\wt{Z}_B(s)} = 
    \frac{1}{R_\infty} \frac{I_{\nu+2}(\sqrt{s\tau})}{I_{\nu}(\sqrt{s\tau})} \,,\qquad \nu > -1 \, .
\end{equation}
This transition is not merely algebraic, but it offers a complementary perspective that is particularly advantageous in the analysis of parallel circuit configurations and in modeling systems where current is the primary observable.
For completeness, let analyze the asymptotic behavior of Bessel admittance, namely
\begin{gather}
    \wt{Y}_B (s) \overset{{s\to 0}}{\sim} \frac{\Gamma(\nu+1)}{\Gamma(\nu+3)} \frac{s\tau}{4\,R_\infty} \,, \label{eq:asymzeroY} \\
    \wt{Y}_B (s) \overset{{s\to \infty}}{\sim} \frac{1}{R_\infty} \,,\label{eq:asyminftyY}
\end{gather}
as intuitively deducible from the asymptotics of the impedance~\eqref{eq:asymzeroZ} and~\eqref{eq:asyminftyZ}.

\subsection{Mittag-Leffler Expansion of the Admittance}

As widely commented before, modified Bessel functions $I_\nu(z)$ are entire in $z \in \mathbb{C}$. Therefore, $I_\nu(\sqrt{s\tau})$ is analytic in $s$ except along a branch cut of the square root, and its set of zeros generates the simple poles of $\wt{Y}(s)$. The quotient $\wt{Y}(s)$ is thus a meromorphic function with simple poles at the points $ \displaystyle s_k = {z_k^2}/{\tau} $ where $z_k$ satisfies $I_\nu(z_k) = 0$.
Since the zeros of $I_\nu(z)$ lie on the imaginary axis, the real poles of $\wt{Y}(s)$ are located at:
\begin{equation} \label{eq:Yzeroesres}
    s_k = -\frac{j_{\nu,k}^2}{\tau} \quad \Rightarrow \quad \lambda_k = \frac{j_{\nu,k}^2}{\tau}
\end{equation}
where $j_{\nu,k}$ is the $k$-th positive zero of the classical Bessel function $J_\nu$.

So, let us employ Mittag-Leffler theorem~\cite{gorenflo2020mittag}, namely since $\wt{Y}(s)$ in~\eqref{eq:Y_Bdef} is meromorphic with simple poles $\{-\lambda_k\}$ and decays sufficiently fast at infinity, it can be expanded in a Mittag-Leffler series
\begin{equation}
\wt{Y}_B(s) = C+\sum_{k=1}^{\infty} \frac{a_k}{s + \lambda_k}\,,    
\end{equation}
leading to the result
\begin{equation}\label{eq:Yseries}
    \wt{Y}_B(s) = \frac{1}{R_\infty} - \frac{1}{R_\infty} \sum_{k=1}^\infty \frac{2 j_{\nu,k}}{\tau s + {j_{\nu,k}^2}} \frac{J_{\nu + 2} (j_{\nu,k})}{J'_{\nu} (j_{\nu,k})} \,, \qquad \nu > -1 \,,
\end{equation}
in terms of Bessel functions of first kind $J_\alpha (z)$ and their first derivative.
\begin{proof}
To prove~\eqref{eq:Yseries}, we first evaluate the constant $C$, that corresponds to the value of the function for $s\to \infty$. 
Subsequently, we compute the residue at the pole $s_k = -\lambda_k $ as in~\eqref{eq:Yzeroesres} given by
\begin{equation}
    a_k = \lim_{s \to -\lambda_k} (s + \lambda_k) \wt{Y}(s) \,.
\end{equation}    
and, for the computation, we the change the variable $z_k = \jmath j_{\nu,k}$, so that
\begin{equation}
a_k = \lim_{z \to z_k} \left( \frac{z^2 - z_k^2}{\tau} \right) \, \frac{1}{R_\infty} \, \frac{I_{\nu+2}(z)}{I_\nu(z)} \,.   
\end{equation}
Using L’H\^opital’s rule, we notice that this latter limit can be written as
\begin{equation}
a_k = \frac{1}{R_\infty} \, \frac{2 z_k}{\tau} \,\frac{I_{\nu+2}(z_k)}{I_\nu'(z_k)}    \,,
\end{equation}
which substituted in the initial expression for the admittance in the point $z_k = \jmath j_{\nu,k}$ turns out to be
\begin{equation}
    \wt{Y}_B(s) = \frac{1}{R_\infty}+\frac{1}{R_\infty} \sum_{k=1}^\infty \frac{2 \jmath j_{\nu,k}}{\tau} \, \frac{I_{\nu+2}(\jmath j_{\nu,k})}{I_\nu'(\jmath j_{\nu,k})} \, \frac{1}{s + \frac{j_{\nu,k}^2}{\tau}} \,,
\end{equation}
or, slighly simplified, 
\begin{equation} 
    \wt{Y}_B(s) = \frac{1}{R_\infty} + \frac{1}{R_\infty}\sum_{k=1}^\infty \frac{2 \jmath j_{\nu,k}}{\tau s + {j_{\nu,k}^2}} \, \frac{I_{\nu+2}(\jmath j_{\nu,k})}{I_\nu'(\jmath j_{\nu,k})} \,  \,.
\end{equation}
In conclusion, we then consider the following relations between Bessel functions and modified Bessel functions and their derivatives
\begin{equation}
    I_\nu (\jmath z) = \jmath^{-\nu} J_\nu(z)
 \,, \qquad     I^{(n)}_\nu (\jmath z) = \jmath^{n-\nu} J^{(n)}_\nu(z) \,,
 \end{equation}
leading to~\eqref{eq:Yseries}.
\end{proof}

The series in~\eqref{eq:Yseries} is a Stieltjes-type expansion and it is a valid and meaningful expansion in simple fractions. It is particularly interesting due to its physical interpretability and being an infinite series  that can be truncated to a finite number of terms.
In fact, it is particularly convenient in contexts as systems theory or viscoelastic modeling to refer to~\eqref{eq:Yseries} as Prony series and to approximate such functions using a sum of finite rational terms
\begin{equation}
    \wt{Y}_B(s) = \frac{1}{R_\infty} - \frac{1}{R_\infty} \sum_{k=1}^N \frac{2 j_{\nu,k}}{\tau s + {j_{\nu,k}^2}} \frac{J_{\nu + 2} (j_{\nu,k})}{J'_{\nu} (j_{\nu,k})} \,, \qquad \nu > -1 \,.
\end{equation}

\section{Numerical plots and applications}
\label{sec:plots}

\subsection{Characteristics of Bessel impedance}

Let us now analyze the impedance of such element, based on its qualitative behavior, by computing the magnitude and the phase of the complex impedance~\cite{Hayt2018engineering}.
These are usually obtained by setting first $s=\jmath \omega$, where $\omega$ is the pulsation in radians per second $(rad/\mathrm{s})$. For our purposes, it is convenient to express instead quantities in terms of the frequency $f$ in Hertz $(Hz)$, by means of $\displaystyle \omega = 2\pi f$.
The magnitude $\vert \wt{Z}_B (f) \vert$ is then obtained by the computation of the absolute value of the complex impedance, then expressed in decibels $(dB)$, i.~e.
\begin{equation}\label{eq:genmagZB}
\big\vert \wt{Z}_B(f) \big\vert = 20 \log_{10}\left( \sqrt{\Re\{\wt{Z}_B(2\pi\jmath f)\}^2 + \Im\{\wt{Z}_B(2\pi\jmath f)\}^2 }  \right)
\end{equation}
while the phase of such impedance, represented as $\angle \wt{Z}_B$, is instead defined according to 
\begin{equation}\label{eq:genphaseZB}
    \angle \wt{Z}_{B}(f) = \arctan\left(\frac{\Im\{\wt{Z}_{B}(2\pi\jmath f)\}}{\Re\{\wt{Z}_{B}(2\pi\jmath f)\}}\right) \,.
\end{equation}
From the technical point of view, it is also possible to write an explicit expression taking advantage of the techniques employed in the evaluation of the attenuation factor for Bessel models~\cite[Sec.~3]{colombaro2023Qbessel}, formalizing the $I_\alpha(z)$ in terms of Kelvin functions~\cite{Abramowitz1965handbook} $\ber{\alpha}{z}$ and $\bei{\alpha}{z}$ defined respectively as real and imaginary parts of the Bessel function of the first kind $J_{\alpha}\left(x e^{i\frac{3}{4}\pi}\right)$, namely
\begin{equation}
\label{eq:ber}
\displaystyle
\ber{\alpha}{z} := \left(\frac{z}{2}\right)^\alpha  \sum_{k=0}^\infty
\frac{\cos\left[\left( \frac{3\alpha}{4}+\frac{k}{2}\right)\pi\right]}{k!\Gamma(k+\alpha+1)} \left(\frac{z^2}{4}\right)^k \, ,
\end{equation}
\begin{equation}
\label{eq:bei}
\displaystyle
\bei{\alpha}{z} := \left(\frac{z}{2}\right)^\alpha  \sum_{k=0}^\infty
\frac{\sin\left[\left( \frac{3\alpha}{4}+\frac{k}{2}\right)\pi\right]}{k!\Gamma(k+\alpha+1)} \left(\frac{z^2}{4}\right)^k \, .
\end{equation}
As a result, we find
\begin{align}
    \Re\{\wt{Z}_B (2\pi\jmath f)\} &= 
    \frac{2}{\pi f}
        \frac{\bei{\nu+2}{\sqrt{2\pi f}}\bei{\nu}{\sqrt{2\pi f}}+\ber{\nu+2}{\sqrt{2\pi f}}\ber{\nu}{\sqrt{2\pi f}}}{\bei{\nu+2}{\sqrt{2\pi f}}^2+\ber{\nu+2}{\sqrt{2\pi f}}^2}   
    \,,\\
    \Im\{\wt{Z}_B (2\pi\jmath f)\} &= 
    \frac{2}{\pi f}
\frac{\bei{\nu+2}{\sqrt{2\pi f}}\ber{\nu}{\sqrt{2\pi f}}-\ber{\nu+2}{\sqrt{2\pi f}}\bei{\nu}{\sqrt{2\pi f}}}{\bei{\nu+2}{\sqrt{2\pi f}}^2+\ber{\nu+2}{\sqrt{2\pi f}}^2}  
    \,,
\end{align}
so that~\eqref{eq:genmagZB} and~\eqref{eq:genphaseZB} admits explicit expressions, in particular the phase is simplified to
\begin{equation}
   \angle \wt{Z}_{B}(f) = 
\arctan
        \left(
\frac{\bei{\nu+2}{\sqrt{2\pi f}}\ber{\nu}{\sqrt{2\pi f}}-\bei{\nu}{\sqrt{2\pi f}}\ber{\nu+2}{\sqrt{2\pi f}}}{\bei{\nu}{\sqrt{2\pi f}}\bei{\nu+2}{\sqrt{2\pi f}}+\ber{\nu}{\sqrt{2\pi f}}\ber{\nu+2}{\sqrt{2\pi f}}}
        \right) \,.
\end{equation}

Thus, the qualitative behavior for the magnitude and the phase as functions of the frequency is analyzed in~\figurename~\ref{fig:ZB}, fixing two among the three parameters $\displaystyle\nu$, $\displaystyle\tau$ and $\displaystyle R_\infty$ and for different values of the third one.
\begin{figure}[ht]
    \centering    \includegraphics[width=\linewidth]{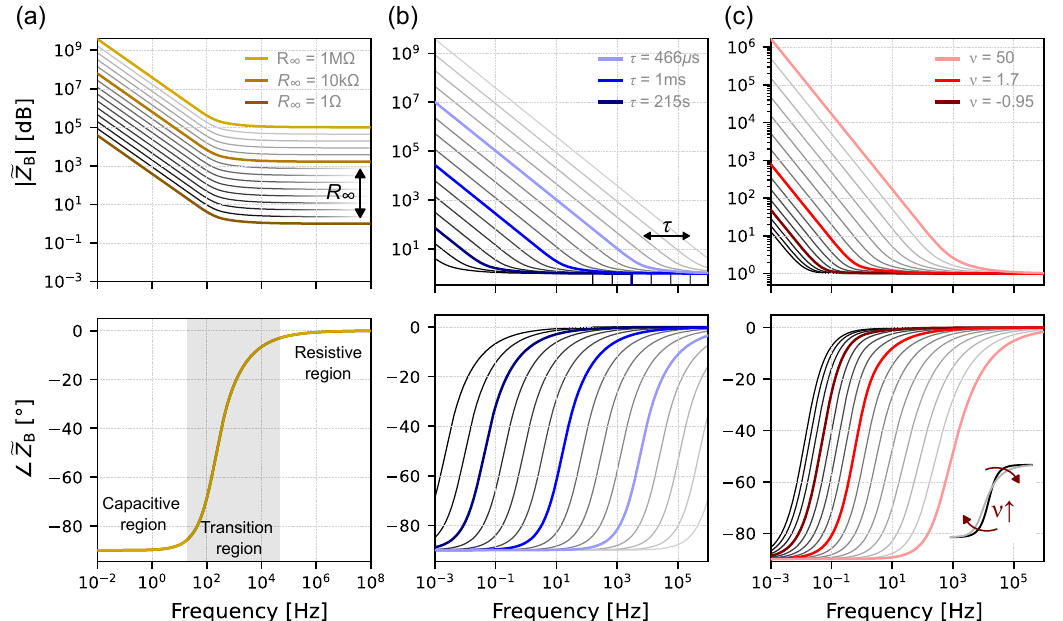}
\caption{Characteristics of the impedance $\wt{Z}_B(f)$, on the top the magnitudes $| \widetilde{Z}_B|$ expressed in $dB$ and on the bottom the phases $\angle \wt{Z}_B$ in grades. 
\textbf{(a)} Fixed $\nu = 1$ and $\tau = 10\, \mathrm{ms}$, varying $R_\infty$.
\textbf{(b)} Fixed $\nu = 1$ and $R_\infty = 1\,\Omega $, varying $\tau$. 
\textbf{(c)}
For varying $\nu$, including the limit value $\nu=-1$, fixed $R_\infty = 1\, \Omega$ and $\tau = 1 \, \mathrm{s}$.}
    \label{fig:ZB}
\end{figure}
Interestingly, we can also appreciate how $\nu$ is related to to the memory.
It is noteworthy that the impedance exhibits capacitive behavior at low frequencies, as indicated by the high magnitude. At intermediate frequencies, the magnitude decreases, signaling a transition toward resistive behavior, while at high frequencies, it tends to stabilize.
An interesting consequence of this behavior might be done by considering the comparison with an RC circuit, recovered for the limit value of $\nu=-1$, as might be appreciate in~\figurename~\ref{fig:aligned_nu}. Tuning the parameter $\nu$  induces noticeable changes in both the magnitude and phase profiles. Notably, the characterization of the Bessel element involves contributions from both resistive and capacitive effects.
\begin{figure}[h!]
    \centering    \includegraphics[width=\linewidth]{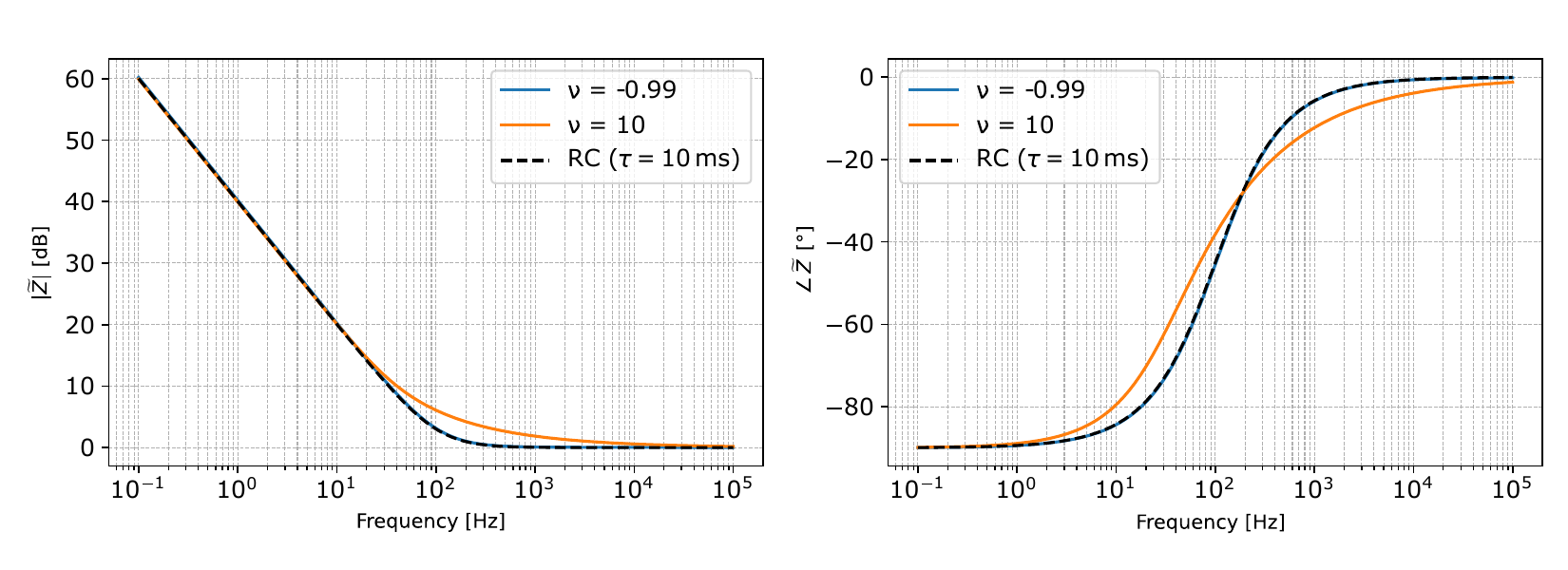}
\caption{Comparison in frequency response between a passive RC circuit ($\tau = \text{10\,ms}$) and $\widetilde{Z}_B(f)$ with $R_{\infty} = 1\,\Omega$, for $\nu = -0.99$ ($\tau = 2\pi \cdot 10\,\mu\text{s}$) and $\nu = 10$ ($\tau = 0.85\,\text{s}$).}
    \label{fig:aligned_nu}
\end{figure}

\subsection{Circuits configuration}

The transition from capacitive to resistive behavior reflects the dielectric relaxation processes active at different spatial and temporal scales, and constitutes a fundamental feature of several physical phenomena.
As discussed in the previous section, the Bessel element impedance reproduces the characteristic relaxation response of dispersive systems, exhibiting a smooth transition from a capacitive regime at low frequency to a resistive plateau at high frequency.

The frequency-dependent transition from capacitive behavior at low frequencies to resistive behavior at high frequencies can be effectively captured by a minimal electrical model consisting of a Bessel element connected in parallel with a resistor $R_0$.
The resistor in parallel accounts for the DC conductivity of the system, ensuring that the model captures both the dynamic dielectric behavior and the baseline conductive path. This simple configuration provides a physically interpretable and analytically tractable foundation for modeling tissue impedance over a wide frequency range. Mathematically, the total impedance of the parallel configuration is simply given by
\begin{equation}\label{eq:ZRparB}
    \wt{Z}_{R_0||B}(s) = \frac{R_0 \wt{Z}_B(s)}{R_0 + \wt{Z}_B(s)} \,.
\end{equation}
Concerning configuration in~\eqref{eq:ZRparB}, we find another noteworthy similarities with classical models used to describe dielectric relaxation. First, employing~\eqref{eq:asymzeroZ} and~\eqref{eq:asyminftyZ}, let us notice that the asymptotic limits at low and high frequencies are respectively
\begin{gather}
    \wt{Z}_{R_0||B}(s) \overset{{s\to 0}}{\sim} R_0   \,, \\ 
    \wt{Z}_{R_0||B}(s) \overset{{s\to \infty}}{\sim} \frac{R_0 R_\infty}{R_0 + R_\infty} \,. \label{eq:ZRB-sinfty}
\end{gather}
So, at low frequencies there is match of the low cutoff frequencies resistor $R_0 $ in the parallel circuit and the behavior of the classical models of relaxation, namely Cole-Cole, Davidson-Cole and Havriliak–Negami~\cite{garrappa2016modelsofdielectric}.
At high frequencies, the value of the impedance is stabilizing at a certain value, that is not necessarily vanishing, differently from classical dielectric models.

To capture the full spectral complexity of some physical models, as might be biological tissues, a single relaxation element is often insufficient \cite{gabriel1996iii}. A more comprehensive representation can be achieved by extending the minimal model to include multiple Bessel elements in parallel, each accounting for a distinct relaxation mechanism active in a specific frequency range, all connected in parallel with the DC resistor $R_0$, thereby forming a modular and physically interpretable model capable of reproducing the layered dielectric response of biological media. Each impedance $\wt{Z}_{B,i}$ introduces an independent set of parameters \((R_{\infty,i}, \tau_i, \nu_i)\), enabling precise control over the position, shape, and strength of each dispersion region. as a consequence, the total impedance of the system ${\wt{Z}_{\text{tot}}}$ is then given by
\begin{equation}\label{eq:ZRparBN}
    \frac{1}{\wt{Z}_{\text{tot}}(s)} = \frac{1}{R_0} + \sum_{i=1}^{N} \frac{1}{\wt{Z}_{B,i}(s)} \,.
\end{equation}

\begin{figure}[h]
    \centering    \includegraphics[width=\linewidth]{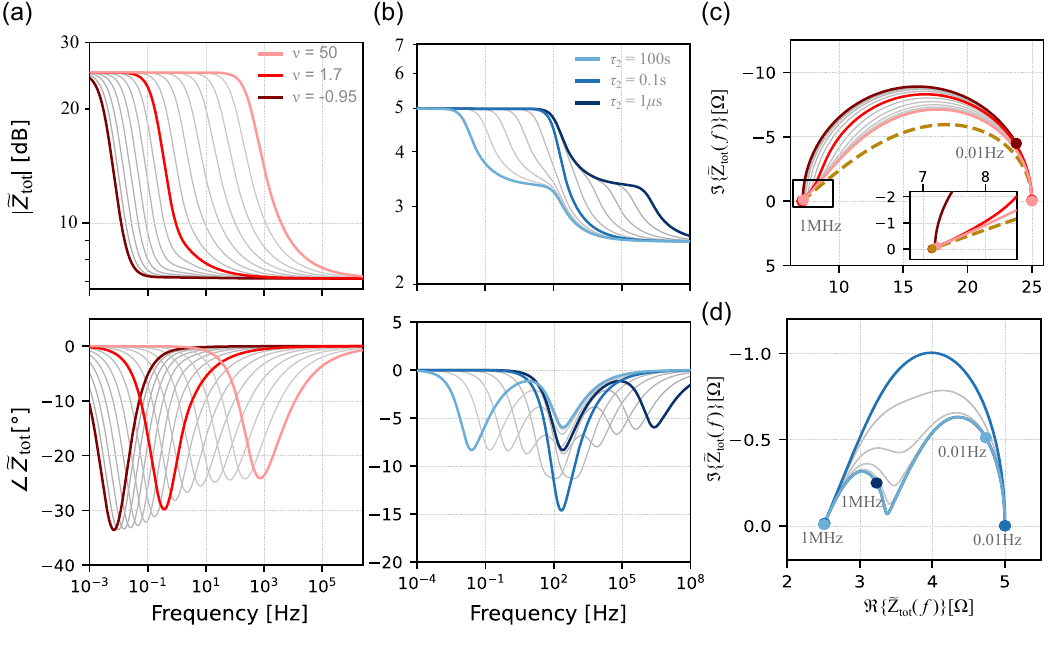}
    \caption{
        Frequency response of $\widetilde{Z}_{\mathrm{tot}}$ in~\eqref{eq:ZRparBN}.
        \textbf{(a)} Magnitude and phase of $\widetilde{Z}_{\text{tot}}$ for $N=1$, with $R_0 = 25\, \Omega$, $R_\infty = 10\,\Omega$ and $\tau = 1 \, \mathrm{s}$.
        \textbf{(b)} Magnitude and phase of $\widetilde{Z}_{\text{tot}}$ for $N=2$, for $R_0= 5\,\Omega$, $R_{\infty,1}=R_{\infty,2}=10\,\Omega$, $\nu_1 = \nu_2 =1$, $\tau_1 = 10 \, \mathrm{ms}$.
        \textbf{(c)} Nyquist plot for $N=1$ with values in \textbf{(a)}.
        \textbf{(d)} Nyquist plot for $N=2$ with values in \textbf{(b)}.
    }
    \label{fig:results_RBB}
\end{figure}
In Figure~\ref{fig:results_RBB},
the frequency response of the configuration $\widetilde{Z}_{\text{tot}}$ defined in~\eqref{eq:ZRparBN} is shown for the cases $N = 1$ and $N=2$.
For $N=1$, the results exhibit the characteristic relaxation or dispersion behavior of the system for fixed \( R_0 \), \( R_{\infty} \), \( \tau\) and varying \( \nu\). At low frequencies, the impedance is dominated by the resistor \( R_0 \), since the Bessel element behaves as a high-impedance capacitive element in this regime, as noticed before. As the frequency increases, a smooth transition occurs in which the relaxation becomes more evident and the impedance decreases accordingly. At high frequencies, the response converges to the equivalent impedance of \( R_0 \parallel R_{\infty} \), in agreement with the asymptotic behavior in~\eqref{eq:ZRB-sinfty}.

The parameter \( \nu \) plays a central role in controlling the sharpness of this capacitive-to-resistive transition. As we may appreciate in the Nyquist plot in~\figurename~\ref{fig:results_RBB}\textbf{(c)}, when \( \nu \) is close to \(-1\), the response closely resembles that of a classical RC circuit, with an almost ideal semicircular arc. As \( \nu \) increases, the transition becomes more gradual, and the Nyquist trajectory flattens and becomes increasingly deformed. Notably, for large values of $\nu$ (e.g., $\nu = 50$), the high-frequency segment of the Nyquist plot becomes approximately linear. This indicates that the impedance exhibits a quasi-constant ratio between its imaginary and real components over a broad frequency range, corresponding to a nearly constant phase angle. Such behavior is characteristic of systems with structured, multiscale dispersion \cite{Mainardi2010book, hilfer2000}.
It is also important to note that, as previously observed in the standalone Bessel element case, there exists a strong coupling between the parameters \( \nu \) and \( \tau \). Although \( \tau \) is fixed in the case $N=1$, varying \( \nu \) shifts the frequency at which the transition from capacitive to resistive behavior occurs.
This effect arises because \( \nu \) modulates the spectral density of the relaxation modes associated with the Bessel element and, interestingly, higher values of \( \nu \) tend to concentrate the response within a narrower frequency band, while lower values spread the transition across a broader range.
Moreover, this connection between the parameters \( \nu \) and \( \tau \) provides an additional degree of freedom for shaping the spectral profile, allowing fine-tuned modeling of dispersive systems without altering the characteristic time constant.

Let us consider instead~\eqref{eq:ZRparBN} for $N = 2$. With respect to the previous case, we now vary the time parameter \( \tau_2 \), while keeping fixed \( \tau_1 = 10\,\mathrm{ms} \) and setting the remaining parameters \( R_0 = 5\,\Omega \), \( R_{\infty,1} = R_{\infty,2} = 10\,\Omega \), and \( \nu_1 = \nu_2 = 1 \).
The results clearly demonstrate that when the time constants \( \tau_1 \) and \( \tau_2 \) are well separated, the impedance response exhibits two distinct relaxation mechanisms. This is evident in the Bode plots of both magnitude and phase in~\figurename~\ref{fig:results_RBB}\textbf{(b)}, where two separate transitions can be observed, while from the Nyquist diagram in~\figurename~\ref{fig:results_RBB}\textbf{(d)}, the presence of two semicircular arcs further confirms the occurrence of two independent dispersions.
As the values of \( \tau_1 \) and \( \tau_2 \) become closer, the relaxation processes begin to merge, resulting in a broader and more pronounced single transition. The Nyquist plot illustrates how the two arcs gradually collapse into a single, deformed semicircle. This behavior can be interpreted as an additive superposition of the individual Bessel impedance responses. When the time parameters are sufficiently close, the overall response resembles that of a single relaxation mechanism with increased intensity and modified curvature, in this case shaped by the shared value \( \nu = 1 \).

\subsection{Validation with biological systems}
In order to validate the theoretical model, we now analyze and compare its characteristics with respect to some existing models of biological systems.
Specifically, in~\figurename~\ref{fig:fittingdata}, we compare the impedance magnitude and phase with the data coming from dry skin and muscle tissue, fitting Bessel functions-based model frequency impedance for a 1-cm square tissue cube \cite{andreuccetti2012}.
\begin{figure}[ht]
    \centering
\includegraphics[width=\linewidth]{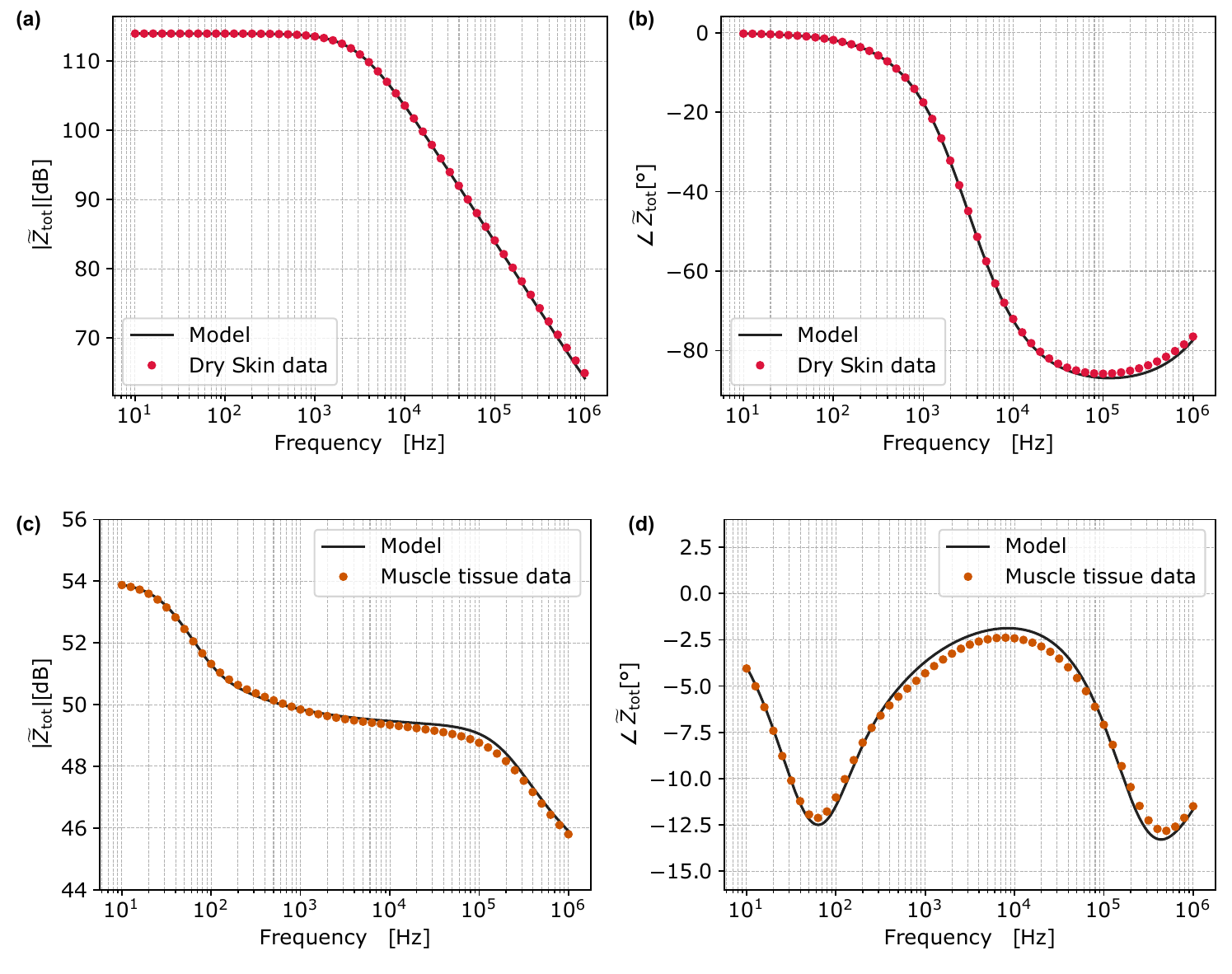}
    \caption{In \textbf{(a)} and \textbf{(b)}, we have the comparison of the proposed Bessel impedance magnitude $| \widetilde{Z}_{\text{tot}}|$ and phase $\angle \widetilde{Z}_{\text{tot}}$ and the data about dry skin.
    In \textbf{(c)} and \textbf{(d)}, we have the evaluation of the Bessel element behavior with respect to the data of the muscle tissue.}
    \label{fig:fittingdata}
\end{figure}
To fit the experimental data of dry skin, we consider the general impedance having $R_0=500 \, k\Omega$ in parallel with a Bessel element, namely $N=1$, with the parameters $R_{\infty}=190\,\Omega$, $\tau=12\,\mu\mathrm{s}$, $\nu =0.5 $.
Concerning the muscle tissue, we have instead two Bessel elements in parallel with
$R_0= 50\, \Omega$, while the other parameters are respectively $R_{\infty,1}=700\,\Omega$, $\tau_1=10\,\mathrm{ms}$, $\nu_1 = -0.15 $ and $R_{\infty,2}=300\,\Omega$, $\tau_2=35\,\mu\mathrm{s}$, $\nu_2 =5 $.
Looking at~\figurename~\ref{fig:fittingdata}, we notice that the model exhibits a remarkable qualitative consistency with the data, highlighting the robustness and well-posedness of the model, concerning tissues modeling.

\section{Discussion}
\label{sec:discussion}

Classical models used to describe spectral relaxation, such as Cole--Cole, Havriliak--Negami, and fractional-order elements, are widely adopted due to their flexibility and compact functional forms. However, these models rely on empirical or nonlocal formulations that introduce several limitations. Fractional derivatives, in particular, lead to infinite-memory operators, complicating time-domain analysis, simulation, and physical interpretation. Their parameters, such as the dispersion exponent~$\alpha$, are often introduced as fitting constants without a clear connection to underlying physical mechanisms. Moreover, these models typically lack closed-form time-domain solutions and are not straightforward to implement in circuit-level simulations or hardware systems~\cite{sabatier2007}.

Electro-mechanical analogy with viscoelastic Bessel models addresses these challenges by introducing a structurally grounded alternative based on modified Bessel functions. Bessel impedance arises naturally from the solution of an underlying differential relation, with the parameter $\nu$ embedded directly as the order of the Bessel function. This formulation ensures internal consistency and enables the model to satisfy key physical and mathematical properties such as analyticity, passivity, BIBO stability, and monotonicity.

From a computational perspective, this novel Bessel element enables direct implementation in both frequency and time domains using closed-form expressions. Unlike fractional-order models, it does not require auxiliary states, rational approximations, or kernel fitting, when expressed in the frequency domain.
Its impulse response is given by an infinite series of exponentially decaying terms with analytically defined time constants, derived from the zeros of the Bessel function. This modal expansion converges rapidly and can be truncated adaptively without loss of stability or causality. As a result, it can be integrated into numerical solvers and circuit-level simulators as a passive and memoryless block, eliminating the need for infinite-memory convolution kernels or nonlocal operators. 

Crucially, the proposed Bessel passive element reproduces essential features of fractional-order systems such as power-law decay and broadband dispersion, without relying on fractional calculus. This provides a consistent and physically interpretable mechanism for modeling distributed relaxation phenomena using realizable components. The model effectively bridges the gap between phenomenological flexibility and structural interpretability. Furthermore, future investigations might explore extensions to more complex dielectric systems, which have been shown to follow generalized fractional dynamics~\cite{giusti2018prabhakar,giusti2020practicalguide,giusti2020generalPrabhakar, colombaro2024electromech-prabhakar}.
Although this Bessel-based element retains a clear structural and physical interpretation, we note that multiple parameter combinations can yield comparable fits to experimental data. This is a known characteristic of inverse problems in broadband dispersive systems, including classical models such as Cole--Cole. A systematic exploration of parameter sensitivity and identifiability, potentially through hierarchical or constrained fitting strategies, lies beyond the scope of the present study but represents a relevant direction for future work.

\section{Conclusions}
\label{sec:end}

This work introduces a novel passive circuit element whose impedance is defined by a closed-form expression involving modified Bessel functions of the first kind. The model provides a physically interpretable and mathematically rigorous alternative to classical relaxation models, avoiding the drawbacks commonly associated with fractional or empirical formulations.

The proposed element satisfies essential properties for realistic system modeling, including analyticity, passivity, BIBO stability, and monotonicity. Its formulation captures both capacitive and resistive behavior, with smooth spectral transitions governed by a small number of interpretable parameters. In contrast to fractional-order elements, the Bessel element reproduces power-law behavior and broadband dispersion without invoking nonlocal operators or infinite-memory kernels.
Its structure and behavior are applicable to any context involving multiscale relaxation dynamics, including dielectric materials, viscoelastic media, biological systems, acoustically lossy systems, and complex soft-matter environments. In these domains, it can serve as a drop-in replacement for fractional-order models that lack physical realizability.

Overall, the proposed passive element depending on Bessel functions establishes a structured and physically grounded framework for modeling distributed relaxation. It bridges the gap between analytical tractability and practical applicability in complex systems and it constitutes a foundational contribution toward a new generation of modeling tools. Its compatibility with both simulation and physical implementation makes it suitable for widespread use in fields where broadband relaxation is a defining feature.

\section*{Acknowledgments}

The authors would like to thank the anonymous reviewer for their valuable comments and suggestions, which helped improve the clarity and quality of this manuscript.
The authors are grateful to Dr.~Aida Villa\'ecija for valuable comments and support. 
The work of I.~C. has been carried out in the framework of the activities of the Italian National Group of Mathematical Physics (GNFM), INdAM. 


\end{document}